\DeclareMathOperator{\rank}{rank}
\begin{document}

\title{Linear Locally Repairable Codes with Random Matrices
\thanks{Part of this work appeared at Global Wireless Summit 2014, Aalborg, Denmark.}
}

\author{Toni Ernvall, Thomas Westerb{\"a}ck and  Camilla Hollanti
\thanks{T. Ernvall is with Turku Centre for Computer Science, Turku, Finland and with the Department of Mathematics and Statistics, University of Turku, Finland (e-mail:tmernv@utu.fi).}

\thanks{T. Westerb{\"a}ck is with Department of Mathematics and Systems Analysis, Aalto University (e-mail:thomas.westerback@aalto.fi).}

\thanks{C. Hollanti is with Department of Mathematics and Systems Analysis, Aalto University (e-mail:camilla.hollanti@aalto.fi).}
}

\maketitle

\newtheorem{definition}{Definition}[section]
\newtheorem{thm}{Theorem}[section]
\newtheorem{proposition}[thm]{Proposition}
\newtheorem{lemma}[thm]{Lemma}
\newtheorem{corollary}[thm]{Corollary}
\newtheorem{exam}{Example}[section]
\newtheorem{conj}{Conjecture}
\newtheorem{remark}{Remark}[section]

\newcommand{\La}{\mathbf{L}}
\newcommand{\h}{{\mathbf h}}
\newcommand{\Z}{{\mathbf Z}}
\newcommand{\R}{{\mathbf R}}
\newcommand{\C}{{\mathbf C}}
\newcommand{\D}{{\mathcal D}}
\newcommand{\F}{{\mathbf F}}
\newcommand{\HH}{{\mathbf H}}
\newcommand{\OO}{{\mathcal O}}
\newcommand{\G}{{\mathcal G}}
\newcommand{\A}{{\mathcal A}}
\newcommand{\B}{{\mathcal B}}
\newcommand{\I}{{\mathcal I}}
\newcommand{\E}{{\mathcal E}}
\newcommand{\PP}{{\mathcal P}}
\newcommand{\Q}{{\mathbf Q}}
\newcommand{\M}{{\mathcal M}}
\newcommand{\separ}{\,\vert\,}
\newcommand{\abs}[1]{\vert #1 \vert}

\begin{abstract}
In this paper, locally repairable codes with all-symbol locality are studied. Methods to modify already existing codes are presented. Also, it is shown that with high probability, a random matrix with a few extra columns guaranteeing the locality property, is a generator matrix for a locally repairable code with a good minimum distance. The proof of this gives also a constructive method to find locally repairable codes.
\end{abstract}

\section{Introduction}
\subsection{Locally Repairable Codes}
In the literature, three kinds of repair cost metrics are studied: \emph{repair bandwidth} \cite{dimakis}, \emph{disk-I/O} \cite{diskIO}, and \emph{repair locality} \cite{Gopalan,Oggier,Simple}. In this paper, the repair locality is the subject of interest.

Given a finite field $\mathbb{F}_q$ with $q$ elements and an injective function $f: \mathbb{F}_q^k \rightarrow \mathbb{F}_q^n$, let $C$ denote the image of $f$. We say that $C$ is a \emph{locally repairable code (LRC)} and has \emph{all-symbol $(r,\delta)$-locality} with parameters $(n,k,d)$, if the code $C$ has minimum (Hamming) distance $d$ and all the $n$ symbols of the code have $(r,\delta)$-locality. The concept was introduced in \cite{prakash}. The $j$th symbol has $(r,\delta)$-locality if there exists a subset $S_j \subseteq \{1,\dots,n\}$ such that $j \in S_j$, $|S_j| \leq r+\delta-1$ and the minimum distance of the code obtained by deleting code symbols corresponding the elements of $\{1,\dots,n\} \setminus S_j$ is at least $\delta$. LRCs are defined when $1 \leq r \leq k$. By a linear LRC we mean a linear code of length $n$ and dimension $k$.

In \cite{prakash} it is shown that we have the following bound for a locally repairable code $C$ of length $n$, dimension $k$, minimum distance $d$ and all-symbol $(r,\delta)$-locality:
\begin{equation}
\label{upperbound}
d \leq n - k - \left( \left \lceil \frac{k}{r} \right \rceil -1 \right)\left( \delta - 1 \right) + 1
\end{equation}
A locally repairable code that meets this bound is called \emph{optimal}. For this reason we write $d_{\text{opt}}(n,k,r,\delta)=n - k - \left( \left \lceil \frac{k}{r} \right \rceil -1 \right)\left( \delta - 1 \right) + 1$.

\subsection{Related Work}
In the all-symbol locality case the information theoretic trade-off between locality and code distance for any (linear or nonlinear) code was derived in \cite{LRCpapailiopoulos}. In \cite{LRCmatroid}, \cite{SongOptimal}, \cite{Rawat} and \cite{TamoBarg} the existence of optimal LRCs was proved for several parameters $(n,k,r)$. Good codes with the weaker assumption of information symbol locality are designed in \cite{Pyramid}. In \cite{Gopalan} it was shown that there exist parameters $(n,k,r)$ for linear LRCs for which the bound of Eq. \eqref{upperbound} is not achievable. LRCs corresponding MSR and MBR points are studied in \cite{Kamath}.

\subsection{Contributions and Organization}
In this paper, we will study codes with all-symbol locality, when given parameters $n$, $k$, $r$, and $\delta$. We will show methods to find smaller and larger codes when given a locally repairable code. At some occasions when the starting point is optimal, also the resulting code is optimal. We will also show that random matrices with a few non-random extra columns guaranteeing the repair property generate a linear LRCs with good minimum distance, with probability approaching to one as the field size approaches the infinity.

Section \ref{Sec:BuildingCodes} gives two procedures to exploit already existing codes when building new ones. In that section we are restricted to the case $\delta=2$. To be exact, it explains how we can build a new linear code of length $n+1$ and dimension $k+1$ with all-symbol repair locality $r+1$ from an already existing linear code of length $n$ and dimension $k$ with all-symbol repair locality $r$ such that the minimum distance remains the same. The same section also introduces a method to find a smaller code when given a code associated to parameters $(n,k,r)$. Namely the procedure gives a code of length $n-1$, dimension $k' \geq k-1$, minimum distance $d' \geq d$ and all-symbol locality $r$.

In Section \ref{Sec:Random} we study random matrices with a few non-random extra columns guaranteeing the repair property. By using the ideas of Section \ref{Sec:minDistance} it is shown that these random codes perform well with high probability. The proof of this is postponed to Section \ref{Sec:minDistance}.

In Section \ref{Sec:minDistance} we give a construction of almost optimal linear locally repairable codes with all-symbol locality. By almost optimal we mean that the minimum distance of a code is at least $d_{\text{opt}}(n,k,r,\delta)-\delta+1$.

\section{Building Codes from Other Codes}\label{Sec:BuildingCodes}
\subsection{Definitions}
In this section we will restrict us in the case $\delta=2$ and show how we can build a new linear code of length $n+1$ and dimension $k+1$ with all-symbol repair locality $r+1$ from an already existing linear code of length $n$ and dimension $k$ with all-symbol repair locality $r$ such that the minimum distance remains the same. Also, we will show how to find a code for parameters $(n'=n-1,k' \geq k-1,d' \geq d,r'=r)$. That is, we will show how to enlarge codes and how to reduce codes.

First we need some definitions. Here $q$ is a prime power and $\mathbb{F}_q$ is a finite field with $q$ elements. Let $\mathbf{x},\mathbf{y} \in \mathbb{F}_q^n$. Then $d(\mathbf{x},\mathbf{y})$ is the Hamming distance of $\mathbf{x}$ and $\mathbf{y}$. The weight of $\mathbf{x}$ is $w(\mathbf{x})=d(\mathbf{x},\mathbf{0})$.
The sphere with radius $s$ and center $\mathbf{x}$ is defined as
\[
B_s(\mathbf{x})=\{\mathbf{y} \in \mathbb{F}_q^n \mid d(\mathbf{x},\mathbf{y}) \leq s\}.
\]
Define furthermore
\[
V_q(n,s)=|B_s(\mathbf{x})|= \sum_{i=0}^{s} \binom{n}{i} (q-1)^i.
\]
For $V_q(n,s)$ we have a trivial upper bound
\[
V_q(n,s) \leq (1+s) \binom{n}{\lfloor \frac{n}{2} \rfloor} q^s.
\]

\subsection{Enlarging codes}
If $r=k$ then we always get an optimal linear LRC by an MDS code. Hence in this section we will assume that $r<k$.
\begin{thm}\label{Thm:Enlarge}
Suppose we have a linear LRC for parameters $(n,k,d,r)$ over a field $\mathbb{F}_q$ with $q > d \binom{n}{\left\lfloor \frac{n}{2} \right\rfloor}$ and $r<k$. Then there exists a linear LRC for parameters $(n'=n+1,k'=k+1,d'=d,r'=r+1)$ over the same field.
\end{thm}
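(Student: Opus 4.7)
My plan is to build the enlarged code by bordering the generator matrix $G$ of the given code $C$ with one new row and one new column. Concretely, I would set
\[
G' = \begin{pmatrix} G & \mathbf{0} \\ \mathbf{v} & 1 \end{pmatrix},
\]
where $\mathbf{v} \in \mathbb{F}_q^n$ is to be chosen. The last column forces $G'$ to have rank $k+1$, so the generated code $C'$ has length $n+1$ and dimension $k+1$, and every codeword of $C'$ has the form $(\mathbf{x}+b\mathbf{v},\, b)$ with $\mathbf{x}\in C$ and $b\in\mathbb{F}_q$.

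For locality, I would take the original repair sets $S_j$ (each of size $\leq r+1$) and define $S_j' = S_j\cup\{n+1\}$ for $1\leq j\leq n$, and pick $S_{n+1}'$ to be some $S_j'$ (say $S_1'$). Each has size $\leq r+2$, matching $(r+1,2)$-locality. A codeword of $C'$ restricted to $S_j'$ is $((\mathbf{x}+b\mathbf{v})|_{S_j},\, b)$, and a weight-one restriction falls into two cases: either $b=0$ and $\mathbf{x}|_{S_j}$ has weight one, which is ruled out by the original locality of $C$; or $b\neq 0$ and $(\mathbf{x}+b\mathbf{v})|_{S_j}=\mathbf{0}$, which happens iff $\mathbf{v}|_{S_j}\in C|_{S_j}$. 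For the minimum distance, a nonzero codeword with $b=0$ already has weight $\geq d$, while a codeword with $b\neq 0$ has weight $w(\mathbf{x}+b\mathbf{v})+1$; scaling by $b^{-1}$ reduces the requirement $w(\mathbf{x}+b\mathbf{v})\geq d-1$ for all $\mathbf{x}\in C$ and $b\in\mathbb{F}_q^*$ to the single condition $\mathbf{v}\notin\bigcup_{\mathbf{y}\in C}B_{d-2}(\mathbf{y})$.

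It remains to show that some $\mathbf{v}$ avoiding all the bad sets exists. Since $C|_{S_j}$ has minimum distance $\geq 2$, it is a proper subspace of $\mathbb{F}_q^{|S_j|}$, so $|C|_{S_j}|\leq q^{|S_j|-1}$; the number of bad $\mathbf{v}$ for one repair set is therefore at most $q^{|S_j|-1}\cdot q^{n-|S_j|}=q^{n-1}$, and summing over at most $n$ distinct repair sets yields at most $n\, q^{n-1}$ locality-bad vectors. Using the Singleton bound $|C|\leq q^{n-d+1}$ together with the stated bound $V_q(n,d-2)\leq (d-1)\binom{n}{\lfloor n/2\rfloor}q^{d-2}$, the number of distance-bad vectors is at most $|C|\cdot V_q(n,d-2)\leq (d-1)\binom{n}{\lfloor n/2\rfloor}q^{n-1}$. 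Adding these and using $\binom{n}{\lfloor n/2\rfloor}\geq n$ gives a total bad count of at most $d\binom{n}{\lfloor n/2\rfloor}q^{n-1}$, which is strictly less than $q^n$ by hypothesis, so a good $\mathbf{v}$ exists.

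The main technical obstacle is making the counting tight enough to match the hypothesised field-size bound. The key observations are (i) that $C|_{S_j}$ is a proper subspace, which gives the extra factor $q^{-1}$ in the locality estimate and makes the per-repair-set contribution only $q^{n-1}$, and (ii) that combining Singleton with the trivial estimate on $V_q$ converts $|C|\cdot V_q(n,d-2)$ into a clean $q^{n-1}$ factor. Everything else is bookkeeping.
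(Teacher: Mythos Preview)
Your proof is correct and uses the same bordered-generator-matrix construction as the paper, but your treatment of locality is genuinely different and in some ways cleaner. The paper first assumes without loss of generality that $k$ is maximal (no linear $(n,k+1,d,r)$-LRC exists over $\mathbb{F}_q$), then chooses $\mathbf{x}$ only subject to $d(\mathbf{x},C)\geq d$, using the volume bound $|C|\,V_q(n,d-1)<q^n$; here the LRC bound with $r<k$ gives $k+d\leq n$, which is where the hypothesis $r<k$ enters. For locality of the first $n$ nodes the paper argues matroid-theoretically: a circuit of size $\leq r+1$ in the matroid of $G$ yields a circuit of size $\leq r+2$ in the matroid of $G_2$ (possibly after adjoining column $n+1$). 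For the new node $n+1$, the paper invokes the maximality of $k$ to conclude it lies in a circuit of size $\leq r+2$. By contrast, you handle locality constructively: you enlarge every original repair set by $\{n+1\}$, add the extra constraints $\mathbf{v}|_{S_j}\notin C|_{S_j}$ to the choice of $\mathbf{v}$, and fold both the locality-bad and the distance-bad counts into the single bound $d\binom{n}{\lfloor n/2\rfloor}q^{n-1}$. Your route avoids the maximality reduction and the matroid language, and it uses the Singleton bound rather than the LRC bound in the counting (trading the paper's sharper $|C|\leq q^{n-d}$ for the weaker $V_q(n,d-2)$ in place of $V_q(n,d-1)$, which balances out). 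The paper's argument is shorter but less transparent about why the new coordinate is locally repairable; yours is more self-contained.
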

\begin{proof}
Let $C$ be a linear LRC for parameters $(n,k,d,r)$ over a field $\mathbb{F}_q$ with $q > d \binom{n}{\left\lfloor \frac{n}{2} \right\rfloor}$. Let $G$ be its generator matrix, \emph{i.e.}, $G$ is $k \times n$ matrix such that its row vectors form a basis for $C$.

Suppose that $k$ is maximal in the meaning that there does not exist a linear LRC for parameters $(n,k+1,d,r)$ over a field $\mathbb{F}_q$. This assumption can be made without loss of generality because we can remove extra base vectors from the resulting code if the dimension is too large. This does not reduce the minimum distance nor increase the repair locality.

Notice first that (\ref{upperbound}) gives
\[
k + d \leq n-\left\lceil\frac{k}{r}\right\rceil+2 \leq n-2+2 = n
\]
and hence
\begin{equation}
\begin{split}
|C|V_q(n,d-1) & \leq q^k \cdot (1+d-1) \binom{n}{\left\lfloor \frac{n}{2} \right\rfloor} q^{d-1} \\
& = d \binom{n}{\left\lfloor \frac{n}{2} \right\rfloor} q^{k+d-1} \\
& < q^{k+d} \\
& \leq q^{n}.
\end{split}
\end{equation}

Therefore there exists a vector $\mathbf{x} \in \mathbb{F}_q^n$ with distance at least $d$ to vectors of $C$. Denote by $G_2$ a new $(k+1) \times (n+1)$ matrix
\[
\left(
\begin{array}{c|c}
G & \mathbf{0}^t \\ \hline
\mathbf{x} & 1
\end{array}
\right)
\]
where $\mathbf{0}$ is an all-zero vector from $\mathbb{F}_q^k$.

Denote by $C_2$ a code that $G_2$ generates. Clearly $C_2 \subseteq \mathbb{F}_q^{n+1}$ and its dimension is $k+1$. Its minimum distance is $d$: Let $\mathbf{u}=a\mathbf{y}+\mathbf{z}$ where $a \in \mathbb{F}_q$, $\mathbf{y}=(\mathbf{x}|1)$ and $\mathbf{z}=(\mathbf{z'}|0)$ with $\mathbf{z'}$ being a vector form $C$. Now if $a=0$ we have
\[
w(\mathbf{u})=w(\mathbf{z})=w(\mathbf{z'}) \geq d
\]
and if $a \neq 0$ we have
\[
w(\mathbf{u})=w(a^{-1}\mathbf{u})=w(\mathbf{y}+a^{-1}\mathbf{z})=d(\mathbf{y},-a^{-1}\mathbf{z})=d(\mathbf{x},-a^{-1}\mathbf{z'})+d(1,0) \geq d+1.
\]

The code $C_2$ has repair locality $r+1$: In the matroid $\mathcal{M}_1$ represented by $G$ each $i=1,\dots,n$ is contained in a circuit of size at most $r+1$. Hence in the matroid $\mathcal{M}_2$ represented by $G_2$ each $i=1,\dots,n$ is contained in a circuit of size at most $r+2$. And hence we chose $k$ to be maximal we have that $n+1$ is contained in some circuit of size at most $r+2$. This completes the proof.
\end{proof}

The following example illustrates the strength of the above result in the case that $r$ and $k$ are close enough to each other.

\begin{exam}
Suppose $\delta=2$ and let $r \in [\frac{k}{2},k)$ and $C$ be an optimal linear locally repairable code for parameters $(n,k,d,r)$ over a field $\mathbb{F}_q$ with $q > d \binom{n}{\lfloor \frac{n}{2} \rfloor}$. Because of the optimality we have
\[
d = n-k-\left\lceil\frac{k}{r}\right\rceil+2 = n-k.
\]

Theorem \ref{Thm:Enlarge} results a locally repairable code for parameters $(n'=n+1,k'=k+1,d'=d,r'=r+1)$. This code is also optimal:
\[
n'-k'-\left\lceil\frac{k'}{r'}\right\rceil+2=n-k-\left\lceil\frac{k+1}{r+1}\right\rceil+2 = n-k=d=d'.
\]
Hence the proof of the above theorem gives a procedure to build optimal codes using already known optimal codes in the case that the size of the repair locality is at least half of the code dimension.
\end{exam}

\subsection{Puncturing codes}
Puncturing is a traditional method in classical coding theory. The next theorem shows that this method is useful also in the context of locally repairable codes.

\begin{thm}\label{Thm:codePuncturing}
Suppose we have a locally repairable code $C \subseteq \mathbf{F}_{q}^{n}$ with all-symbol locality associated to parameters $(n,k,d,r)$. There exists a code $C' \subseteq \mathbf{F}_{q}^{n-1}$ associated to parameters $(n'=n-1,k',d',r'=r)$ with $k' \geq k-1$ and $d' \geq d$. Also, if $C$ is linear then we may assume that $C'$ is linear.
\end{thm}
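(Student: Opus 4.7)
The plan is to obtain $C'$ by shortening $C$ at one coordinate. Fix any coordinate $i \in \{1,\dots,n\}$ and any value $\alpha \in \mathbb{F}_q$, and set
\[
C' = \{(c_1,\dots,c_{i-1},c_{i+1},\dots,c_n) : \mathbf{c} \in C \text{ and } c_i = \alpha\} \subseteq \mathbb{F}_q^{n-1}.
\]
For linear $C$ I would take $\alpha = 0$, which makes $C'$ a subspace of dimension $k-1$ or $k$, depending on whether the $i$th column of a generator matrix of $C$ is nonzero. In the general case I would pick $\alpha$ to be the most common value of $c_i$ across $\mathbf{c} \in C$, so that $|C'| \geq |C|/q = q^{k-1}$ by pigeonhole, and then pass to any subcode of cardinality $q^{k'}$ with $k' \geq k-1$; restricting to a subcode can only improve minimum distance and locality.

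The bound $d' \geq d$ is immediate: two distinct elements $\mathbf{c}'_1, \mathbf{c}'_2 \in C'$ lift to distinct $\mathbf{c}_1, \mathbf{c}_2 \in C$ that coincide at coordinate $i$ (both equal to $\alpha$), so $d(\mathbf{c}'_1, \mathbf{c}'_2) = d(\mathbf{c}_1, \mathbf{c}_2) \geq d$.

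The main step, and the place where care is required, is verifying that $C'$ retains all-symbol $(r,2)$-locality. For each $j \in \{1,\dots,n\} \setminus \{i\}$, let $S_j$ be a repair set for $j$ in $C$ with $|S_j| \leq r+1$, and define $S'_j = S_j \setminus \{i\}$; then $j \in S'_j$ and $|S'_j| \leq r+1$. I would argue by contradiction that $C'|_{S'_j}$ has minimum distance at least $2$: a pair of distinct elements of $C'|_{S'_j}$ at distance $1$ would lift to codewords $\mathbf{c}_1, \mathbf{c}_2 \in C$ with $c_1(i) = c_2(i) = \alpha$ whose restrictions to $S'_j$ differ in exactly one coordinate, and because they also agree at $i$, their restrictions to $S_j$ would form two distinct elements of $C|_{S_j}$ at distance $1$, contradicting the locality of $C$ at symbol $j$. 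This is why one uses shortening rather than plain puncturing here: pinning coordinate $i$ to a single value $\alpha$ in all lifts rules out exactly the distance-$1$ pairs that naive puncturing would create whenever $i$ lies in a local repair set.
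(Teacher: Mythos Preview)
Your argument is correct and follows essentially the same route as the paper: shortening at a coordinate, choosing the value $\alpha$ by pigeonhole (and $\alpha=0$ in the linear case), then observing that distances can only grow and that locality is preserved because the deleted coordinate carries the known constant $\alpha$. Your verification of locality via a distance-$1$ contradiction is slightly more formal than the paper's phrasing (``we know that there is $a$ stored into that node''), but the content is identical.
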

\begin{proof}
Write
\[
C_x=\{\mathbf{y} \in C \mid \mathbf{y}=(x,\mathbf{z}) \text{ where } \mathbf{z} \in \mathbf{F}_{q}^{n-1} \}
\]
for $x \in \mathbf{F}_{q}$.

Clearly each element of $C$ is contained in precisely one of the subsets $C_x$ with $x \in \mathbf{F}_{q}$. Hence there exists $a \in \mathbf{F}_{q}$ such that
\[
|C_a| \geq \frac{|C|}{q}=q^{k-1}.
\]
Define $C'$ to be a code we get by puncturing the first component of $C_a$, \emph{i.e.},
\[
C' =\{\mathbf{z} \in \mathbf{F}_{q}^{n-1} \mid (a,\mathbf{z}) \in C_a \}.
\]

Now, $C'$ is of size at least $k$, its minimum distance $d'$ is the same as the minimum distance of $C_a$, that is, at least the minimum distance of $C$ and hence $d' \geq d$, and $C'$ has repair locality $r$. Indeed, suppose we need to repair the $j$th node. If the first node from the original system is not in the repair locality, then the repair can be made as in the original code. If the first node is in the repair locality, then we know that there is $a$ stored into that node and hence the repair can be made using the other nodes from the original locality.

If $C$ is linear then it is easy to verify that we can choose $a$ to be $0$ and in this case also $C'$ is linear.
\end{proof}

\begin{exam}\label{Exam:InductionPuncture}
Suppose that $C$ is an optimal code. It is associated with parameters $(n,k,d,r)$ with equality
\[
d = n-k-\left\lceil\frac{k}{r}\right\rceil +2.
\]
Let $C'$ be a code formed from $C$ using method explained in Theorem \ref{Thm:codePuncturing}. Hence it is associated with parameters $(n'=n-1,k' \geq k-1,d' \geq d,r' =r)$. This code is optimal if
\[
d = n-k-\left\lceil\frac{k-1}{r}\right\rceil +2.
\]
This is true if and only if
\[
\left\lceil\frac{k}{r}\right\rceil=\left\lceil\frac{k-1}{r}\right\rceil,
\]
\emph{i.e.},
$r$ does not divide $k-1$.
\end{exam}

\section{Random matrices as generator matrices for locally repairable codes}\label{Sec:Random}

\subsection{The structure of the codes}
We will study such linear codes that nodes are divided into such non-overlapping sets $S_1,S_2,\dots,S_A$ that any node $x \in S_j$ can be repaired by any $|S_j \setminus \{x\}|-(\delta-2)=|S_j|-\delta+1$ nodes from the set $S_j$. We also require that $|S_j| \leq r+\delta-1$ and to guarantee the all-symbol repairing property, that $\cup_{j=1}^{A} S_j =\{1,\dots,n\}$. Suppose we have a $k$-dimensional linear code and nodes, say, $1,2,\dots,s$ ($\delta \leq s \leq r+\delta-1$) corresponding columns in the generator matrix, form a repair set $S_1$. Denote the $k \times s$ matrix these columns define by $G$ and write $t=s-\delta+1$. Intuitively it is natural to require that $G$ is of maximal rank, that is, the rank of $G$ is $t$.

By the locality assumption, any $t$ columns can repair any other column, \emph{i.e.}, any $t$ columns span the same subspace as all the $s$ columns. So we have
\[
G=(\mathbf{x}_1|\dots|\mathbf{x}_t|\mathbf{y}_{1}|\dots|\mathbf{y}_{\delta-1})
\]
where each $\mathbf{y}_j$ can be represented as a linear combination of $\mathbf{x}_1,\dots,\mathbf{x}_t$ and $\mathbf{x}_1,\dots,\mathbf{x}_t$ are linearly independent. This gives that
\[
G=(\mathbf{x}_1|\dots|\mathbf{x}_t)(I_t|B)
\]
where $I_t$ is an identity matrix of size $t$ and $B$ is $t \times (\delta-1)$ matrix.

Let $G'$ consist of some $t$ columns of $G$ and $C$ consist of the corresponding columns of $(I_t|B)$. It is easy to verify that
\[
G'=(\mathbf{x}_1|\dots|\mathbf{x}_t)C
\]
and hence
\[
\rank(C)=\rank((\mathbf{x}_1|\dots|\mathbf{x}_t)C)=\rank(G')=t.
\]

Consider a submatrix of $B$ corresponding rows $i_1,\dots,i_l$ and columns $j_1,\dots,j_l$. It is easy to check that this submatrix is invertible if and only if a submatrix corresponding the columns $\{1,\dots,t\} \setminus \{i_1,\dots,i_l\}$ and $\{t+j_1,\dots,t+j_l\}$ of $(I_t|B)$ is invertible. This is invertible since the rank of the submatrix of $G$ corresponding the same columns is $t$. Hence any square submatrix of $B$ is invertible

Suppose that matrices $(I_{t_1}|B_1),\dots,(I_{t_A}|B_A)$ are of this form. It is natural to study codes with generator matrix of form
\[
\left((\mathbf{x}_{1,1}|\dots|\mathbf{x}_{1,t_1})(I_{t_1}|B_1) | \dots | (\mathbf{x}_{A,1}|\dots|\mathbf{x}_{A,t_A})(I_{t_A}|B_A) \right)
\]
and ask how we should choose the vectors $\mathbf{x}_{1,1},\dots,\mathbf{x}_{1,t_1},\dots,\mathbf{x}_{A,1},\dots,\mathbf{x}_{A,t_A}$ such that the given code has the biggest possible minimum distance.

Notice also that since the rank of a generator matrix is $k$, we have
\begin{equation}
\begin{split}
k & \leq \rank\left( (\mathbf{x}_{1,1}|\dots|\mathbf{x}_{1,t_1})(I_{t_1}|B_1) \right) + \dots + \rank\left( (\mathbf{x}_{A,1}|\dots|\mathbf{x}_{A,t_A})(I_{t_A}|B_A) \right) \\
& \leq t_1 + \dots + t_A
\end{split}
\end{equation}
and hence
\[
k \leq n -A(\delta-1) \leq n -\left\lceil\frac{n}{r+\delta-1}\right\rceil(\delta-1).
\]

\subsection{Random codes}\label{Subsec:RandomResult}
In this subsection we study locally repairable codes generated by random matrices with a few extra columns consisting of linear combinations of the previous columns guaranteing the repair property. It is shown that this kind of code has a good minimum distance with probability approaching to $1$ as the field size $q$ approaches infinity. The proof of this is postponed to the subsection \ref{Subsec:RandomProof}.

\begin{thm}
Given parameters $(n,k,r,\delta)$ and $A>0$ with $r<k$, $n-A(\delta-1) \geq k$, and positive integers $s_1 \leq s_2 \leq \dots \leq s_A$ such that $n=\sum_{j=1}^{A}s_j$ and $\delta \leq |s_j| \leq r+\delta-1$ for all $j=1,\dots,A$. Assume also we have matrices $B_1,B_2,\dots,B_A$ such that $B_j$ is such a $(s_j-\delta+1)\times (\delta-1)$ matrix that all its square submatrices are invertible ($j=1,\dots,A$). Let $x_{i,j}$ be uniformly independent and identically distributed random variables over $\mathbf{F}_q$.

Consider matrices $E$, $F$ and $G$ that are defined as follows:
\begin{equation}\label{Eq:matrixE}
E=\begin{pmatrix}
  x_{1,1} & x_{1,2} & \cdots & x_{1,n-A(\delta-1)} \\
  x_{2,1} & x_{2,2} & \cdots & x_{2,n-A(\delta-1)} \\
  \vdots  & \vdots  & \ddots & \vdots    \\
  x_{k,1} & x_{k,2} & \cdots & x_{k,n-A(\delta-1)}
 \end{pmatrix}=(E_1|E_2|\dots|E_A),
\end{equation}
where $E_j$ is a $k \times (s_j-\delta+1)$ matrix,
\[
F=(E_1B_1|E_2B_2|\dots|E_AB_A)
\]
and
\[
G=(E|F).
\]
With probability approaching to one as $q \rightarrow \infty$, $G$ is a generator matrix for a $k$-dimensional locally repairable code of length $n$ with all-symbol $(r,\delta)$-locality and minimum distance
\[
d \geq n-k-z(\delta-1)+1
\]
where $z$ is an integer with properties
\[
\sum_{j=1}^{z} (s_j-\delta+1) \leq k-1 \text{ and } \sum_{j=1}^{z+1} (s_j-\delta+1) > k-1.
\]
\end{thm}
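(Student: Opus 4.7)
The plan is to handle the $(r,\delta)$-locality deterministically and the minimum distance bound probabilistically. For locality, observe that up to a column permutation $G$ equals the block-diagonal matrix $(E_1(I_{t_1}|B_1)\mid\cdots\mid E_A(I_{t_A}|B_A))$ with $t_j=s_j-\delta+1$, so the $j$th block $S_j$ has $s_j\leq r+\delta-1$ columns. By the hypothesis that every square submatrix of $B_j$ is invertible, any $t_j$ columns of $(I_{t_j}|B_j)$ are linearly independent, hence any $t_j$ columns of block $S_j$ span the column space of $E_j$. Consequently the restriction of the code to $S_j$ has minimum distance at least $\delta$, so $(r,\delta)$-locality holds for \emph{every} realisation of the random entries.

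For the minimum distance, I would use the standard subset-rank characterisation $d = n - \max\{|J| : \rank(G_J)\leq k-1\}$, where $G_J$ is the restriction of $G$ to the columns indexed by $J$, together with the factorisation $G_J = E\cdot H_J$, where $H_J$ is a deterministic $(n-A(\delta-1))\times |J|$ submatrix of $\mathrm{diag}((I_{t_j}|B_j))_{j=1}^A$, independent of the random entries $x_{i,j}$. Writing $m_j=|J\cap S_j|$, the locality observation above gives $\rank(H_J)=\sum_{j=1}^A\min(m_j,t_j)$.

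The combinatorial heart of the proof is the implication $|J|>k-1+z(\delta-1) \Rightarrow \rank(H_J)\geq k$. I would establish the contrapositive by a short greedy optimisation: each block satisfies $m_j\leq\min(m_j,t_j)+(\delta-1)$, with the bonus $\delta-1$ gained only when the block is fully included. Since $t_1\leq\cdots\leq t_A$, it is optimal to saturate the first $z$ blocks; by the definition of $z$, block $z+1$ cannot be saturated without exceeding the rank budget $k-1$, and the residual $k-1-\sum_{j\leq z}t_j$ units of budget contribute only one column apiece to $|J|$. Hence $\max|J|=\sum_{j\leq z}s_j+(k-1-\sum_{j\leq z}t_j)=k-1+z(\delta-1)$.

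For the probabilistic step, fix $J$ with $\rank(H_J)\geq k$ and pick an $(n-A(\delta-1))\times k$ submatrix $H^\ast$ of $H_J$ of rank $k$. The linear map $v\mapsto vH^\ast$ from $\mathbb{F}_q^{n-A(\delta-1)}$ to $\mathbb{F}_q^k$ is surjective with uniform-size fibres, so each row of $EH^\ast$ is uniformly distributed on $\mathbb{F}_q^k$, independently across rows. Hence $\Pr[\rank(G_J)<k]\leq 1-\prod_{i=0}^{k-1}(1-q^{i-k})=O(1/q)$, and a union bound over the at most $2^n$ candidate subsets $J$ (the case $J=\{1,\dots,n\}$ simultaneously ensures $\rank(G)=k$) drives the total failure probability to $0$ as $q\to\infty$. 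I expect the combinatorial maximisation to be the main conceptual obstacle; the probabilistic estimate is routine once the factorisation $G_J=EH_J$ is in hand.
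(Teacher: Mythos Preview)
Your proof is correct and notably more direct than the paper's.  The paper proves the theorem by first giving in Section~\ref{Subsec:Construction} an elaborate \emph{deterministic} construction: vectors $\mathbf{g}_{i,j}$ are chosen one at a time so as to avoid a carefully built family of affine subspaces (tracked via the sets $U_m$ and $V_j$), and a lengthy Gaussian--elimination argument shows that this avoidance forces any $k$ columns of $G$, with at most $t_j$ per block, to be independent.  Only then does Section~\ref{Subsec:RandomProof} observe that a random $\mathbf{g}_{i,j}$ lands in the forbidden region with probability $O(1/q)$, and Section~\ref{Subsec:Analysis} deduces the minimum distance from the independence property by a zero--counting argument on a minimum--weight codeword.

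You bypass the whole sequential construction.  Your factorisation $G_J=EH_J$ with $H_J$ deterministic lets you replace the paper's inductive avoidance argument by a one--line random--matrix estimate: once $\rank(H_J)\geq k$, the rows of $EH^\ast$ are i.i.d.\ uniform on $\mathbb{F}_q^k$, so $\rank(G_J)=k$ except with probability $O(1/q)$, and a union bound over $2^n$ subsets finishes.  Your combinatorial lemma (maximising $|J|$ under the rank budget $\sum_j\min(m_j,t_j)\leq k-1$) is equivalent to the zero--counting in Section~\ref{Subsec:Analysis}, just stated as an explicit optimisation.  The trade--off is that the paper's longer route also yields a deterministic construction over any field with $q$ exceeding an explicit (if crude) bound, whereas your argument is purely probabilistic and gives no constructive algorithm.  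For the theorem as stated, your approach is cleaner; if one also wants the deterministic byproduct, the paper's machinery is needed.
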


\section{Code construction}\label{Sec:minDistance}

\subsection{Construction}\label{Subsec:Construction}

In this subsection we will give a construction for linear locally repairable codes with all-symbol locality over a field $\mathbb{F}_q$ with $q>2(r\delta)^{r4^{r+1}+4^{r+1}}\binom{n+2 (r\delta)^{r4^{r+1}}}{k-1}$ when given parameters $(n,k,r,\delta)$ such that $n-\left\lceil\frac{n}{r+\delta-1}\right\rceil(\delta-1) \geq k$. We also assume that $k<n$ and $n \not\equiv 1,2,\dots,\delta-1 \mod r+\delta-1$. Write $n=a(r+\delta-1)+b$ with $0 \leq b < r+\delta-1$.

We will construct a generator matrix for a linear code under above assumptions. The minimum distance of the constructed code is studied in Subsection \ref{Subsec:Analysis}.

Next we will build $A=\left\lceil\frac{n}{r+\delta-1}\right\rceil$ sets $S_1, S_2, \dots, S_A$ such that each of them consists of $r+\delta-1$ vectors from $\mathbb{F}_q^k$ except $S_A$ that consists of $n-(A-1)(r+\delta-1)$ vectors from $\mathbb{F}_q^k$. Write
\[
M=(I_r | B_{r \times (\delta-1)})=
\left( \begin{matrix}
   a_{1,1} & \hdots & a_{1,r+\delta-1} \\
   \vdots  &        & \vdots \\
   a_{r,1} & \hdots & a_{r,r+\delta-1}
  \end{matrix} \right)
\]
where $I_r$ is an identity matrix of size $r$ and $B_{r \times (\delta-1)}$ is such $r \times (\delta-1)$ matrix that all its square submatrices are invertible. Define further
\[
U_0 = \left\{ a_{i_1,i_2} \mid 1 \leq i_1 \leq r \text{ and } 1 \leq i_2 \leq r+\delta-1 \right\} 
\]
and
\[
U_{m+1} = \left\{ a - \frac{bc}{d} \mid a,b,c,d \in U_m \text{ and } d \neq 0 \right\} \cup U_m
\]
for $m=0,\dots,r$. We have $U_0 \leq r\delta$, $|U_{m+1}| \leq |U_m|^4$ and $|U_{r+1}| \leq (r\delta)^{4^{r+1}}$.

First, choose any $r$ linearly independent vectors $\mathbf{g}_{1,1},\dots,\mathbf{g}_{1,r} \in \mathbb{F}_{q}^{k}$. Let
\[
\mathbf{s}_{1,r+j}=\sum_{l=1}^{r}a_{l,r+j}\mathbf{g}_{1,l}
\]
for $j=1,\dots,\delta-1$. These $r+\delta-1$ vectors form the set $S_1$. Notice that these vectors correspond the columns of matrix
\[
(\mathbf{g}_{1,1}|\dots|\mathbf{g}_{1,r}) M.
\]
This set has the property that any $r$ vectors from this set are linearly independent.

Let $1< i \leq A$. Assume that we have $i-1$ sets $S_1, S_2, \dots, S_{i-1}$ such that when taken at most $k$ vectors from these sets, at most $r$ vectors from each set, these vectors are linearly independent. Next we will show inductively that this is possible by constructing the set $S_i$ with the same property.

Let $\mathbf{g}_{i,1}$ be any vector such that when taken at most $k-1$ vectors from the already built sets, with at most $r$ vectors from each set, then $\mathbf{g}_{i,1}$ and these $k-1$ other vectors are linearly independent. This is possible since $\binom{n}{k-1}q^{k-1}<q^k$.

Write $\mathbf{s}_{i,r+m}^{(h)} = \sum_{l=1}^{h} a_{l,r+m}\mathbf{g}_{i,l}$ for $m=1,\dots,\delta-1$ and $h=1,\dots,r$ and to shorten the notation, write $\mathbf{s}_{i,r+m}=\mathbf{s}_{i,r+m}^{(r)}$ for $m=1,\dots,\delta-1$.

Suppose we have $j-1$ vectors $\mathbf{g}_{i,1},\dots,\mathbf{g}_{i,j-1}$ such that when taken at most $k$ vectors from the sets $S_1, S_2, \dots, S_{i-1}$ or $\{\mathbf{g}_{i,1},\dots,\mathbf{g}_{i,j-1},\mathbf{s}_{i,r+1}^{(j-1)},\dots,\mathbf{s}_{i,r+\delta-1}^{(j-1)} \}$, with at most $r$ vectors from each set $S_1, S_2, \dots, S_{i-1}$ and at most $j-1$ vectors from the set $\{\mathbf{g}_{i,1},\dots,\mathbf{g}_{i,j-1},\mathbf{s}_{i,r+1}^{(j-1)},\dots,\mathbf{s}_{i,r+\delta-1}^{(j-1)} \}$, then these vectors are linearly independent.

Let
\[
V_j=\{ u_1\mathbf{g}_{i,1} + \dots + u_j\mathbf{g}_{i,j} \mid u_h \in U_{r+1} \text{ and } u_j \neq 0 \} \bigcup  \{\mathbf{g}_{i,1},\dots,\mathbf{g}_{i,j} \}.
\]
Notice that $V_j$ is finite and to be precise, $|V_j| < 2|U_{r+1}|^{j} \leq 2 (r\delta)^{j4^{r+1}}$.

Choose $\mathbf{g}_{i,j}$ to be any vector with the following properties: when taken at most $k-1$ vectors from the sets $S_1, S_2, \dots, S_{i-1}$ or $V_{j-1}$, with at most $r$ vectors from each set $S_1, S_2, \dots, S_{i-1}$ and at most $j-1$ vectors from the set $V_{j-1}$, then none of the vectors in $V_{j} \setminus \{ \mathbf{g}_{i,1},\dots,\mathbf{g}_{i,j-1} \}$ does not belong to a subspace that these $k-1$ other vectors span. This is possible because there are at most $\binom{n+2 (r\delta)^{j4^{r+1}}}{k-1}$ different possibilities to choose, each of the options span a subspace with $q^{k-1}$ vectors, and since $q$ is large we have $2(r\delta)^{j4^{r+1}+4^{r+1}}\binom{n+2 (r\delta)^{j4^{r+1}}}{k-1}q^{k-1}<q^k$. Notice that $u\mathbf{g}_{i,j}+\mathbf{v} \in V$ (where $V$ is some subspace) if and only if $u\mathbf{g}_{i,j} \in -\mathbf{v}+V$.

To prove the induction step we have to prove the following thing: when taken at most $k$ vectors from sets $S_1, S_2, \dots, S_{i-1}$ or $\{\mathbf{g}_{i,1},\dots,\mathbf{g}_{i,j},\mathbf{s}_{i,r+1}^{(j)},\dots,\mathbf{s}_{i,r+\delta-1}^{(j)}\}$, with at most $r$ vectors from each set $S_1, S_2, \dots, S_{i-1}$ and at most $j$ vectors from the set $\{\mathbf{g}_{i,1},\dots,\mathbf{g}_{i,j},\mathbf{s}_{i,r+1}^{(j)},\dots,\mathbf{s}_{i,r+\delta-1}^{(j)}\}$, then these vectors are linearly independent. Let $1 \leq l \leq j$, $\mathbf{v}$ be a sum of at most $k-l$ vectors from the sets $S_1, S_2, \dots, S_{i-1}$ with at most $r$ vectors from each set. We will assume a contrary: We have coefficients $s_{1},\dots,s_{l} \in \mathbf{F}_q \setminus \{ 0 \}$ such that:
\[
\mathbf{v} + \sum_{m=1}^{l}s_m\sum_{h=1}^{j}a_{h,f_m}\mathbf{g}_{i,h} = \mathbf{0}
\]
with $f_1 \leq \dots \leq f_l$ and $f_m \not\in \{ j+1,j+2,\dots,r \}$ for $m=1,\dots,l$.

Write
\[
\sum_{m=1}^{l}s_m\sum_{h=1}^{j}a_{h,f_m}\mathbf{g}_{i,h} = \sum_{h=1}^{j}b_h \mathbf{g}_{i,h},
\]
\emph{i.e.},
\[
\begin{pmatrix}
    b_1 \\
    \vdots \\
    b_j
\end{pmatrix}
=\begin{pmatrix}
    a_{1,f_1} & \hdots & a_{1,f_l} \\
    \vdots & & \vdots \\
    a_{j,f_1} & \hdots & a_{j,f_l}
\end{pmatrix}
\begin{pmatrix}
    s_1 \\
    \vdots \\
    s_l
\end{pmatrix}.
\]
Without loss of generality we may assume that $a_{j,f_l} \neq 0$.

Let $t$ be the smallest non-negative integer such that $b_{j-t} \neq 0$. Such $t$ exists since the rank of $(a_{h,f_i})_{j \times l}$ is $l$ and
\[
\begin{pmatrix}
    s_1 \\
    \vdots \\
    s_l
\end{pmatrix} \neq \mathbf{0}.
\]

Hence we have
\[
\begin{pmatrix}
    b_1 \\
    \vdots \\
    b_{j-t} \\
    0 \\
    \vdots \\
    0
\end{pmatrix}
=\begin{pmatrix}
    c_{1,f_1}^{(1)} & \hdots & c_{1,f_{l-1}}^{(1)} & 0 \\
    \vdots & & \vdots  & \vdots \\
    c_{j-t,f_1}^{(1)} & \hdots & c_{j-t,f_{l-1}}^{(1)} & 0 \\
    c_{j-t+1,f_1}^{(1)} & \hdots & c_{j-t+1,f_{l-1}}^{(1)} & 0 \\
    \vdots & & \vdots & \vdots \\
    c_{j-1,f_1}^{(1)} & \hdots & c_{j-1,f_{l-1}}^{(1)} & 0 \\
    a_{j,f_1} & \hdots & a_{j,f_{l-1}} & a_{j,f_l}
\end{pmatrix}
\begin{pmatrix}
    s_1 \\
    \vdots \\
    s_l
\end{pmatrix}
\]
where $c_{h,f_i}^{(1)}=a_{h,f_i}-\frac{a_{h,f_l}a_{j,f_i}}{a_{j,f_l}} \in U_1$.

This gives
\[
\begin{pmatrix}
    b_1 \\
    \vdots \\
    b_{j-t} \\
    0 \\
    \vdots \\
    0
\end{pmatrix}
=\begin{pmatrix}
    c_{1,f_1}^{(1)} & \hdots & c_{1,f_{l-1}}^{(1)} \\
    \vdots & & \vdots  \\
    c_{j-t,f_1}^{(1)} & \hdots & c_{j-t,f_{l-1}}^{(1)} \\
    c_{j-t+1,f_1}^{(1)} & \hdots & c_{j-t+1,f_{l-1}}^{(1)} \\
    \vdots & & \vdots  \\
    c_{j-1,f_1}^{(1)} & \hdots & c_{j-1,f_{l-1}}^{(1)}
\end{pmatrix}
\begin{pmatrix}
    s_1 \\
    \vdots \\
    s_{l-1}
\end{pmatrix}
=\begin{pmatrix}
    c_{1,f_1}^{(2)} & \hdots & c_{1,f_{l-2}}^{(2)} & 0 \\
    \vdots & & \vdots & \vdots  \\
    c_{j-t,f_1}^{(2)} & \hdots & c_{j-t,f_{l-2}}^{(2)} & 0 \\
    c_{j-t+1,f_1}^{(2)} & \hdots & c_{j-t+1,f_{l-2}}^{(2)} & 0 \\
    \vdots & & \vdots & \vdots  \\
    c_{j-2,f_1}^{(1)} & \hdots & c_{j-2,f_{l-2}}^{(1)} & 0 \\
    c_{j-1,f_1}^{(1)} & \hdots & c_{j-1,f_{l-2}}^{(1)} & c_{j-1,f_{l-1}}^{(1)}
\end{pmatrix}
\begin{pmatrix}
    s_1 \\
    \vdots \\
    s_{l-1}
\end{pmatrix}
\]
and by continuing the process
\[
\begin{pmatrix}
    b_1 \\
    \vdots \\
    b_{j-t} \\
\end{pmatrix}
=\begin{pmatrix}
    c_{1,f_1}^{(t)} & \hdots & c_{1,f_{l-t}}^{(t)} \\
    \vdots & & \vdots \\
    c_{j-t,f_{1}}^{(t)} & \hdots & c_{j-t,f_{l-t}}^{(t)} \\
\end{pmatrix}
\begin{pmatrix}
    s_1 \\
    \vdots \\
    s_{l-t}
\end{pmatrix}
\]
where $c_{h,f_i}^{(v)}=c_{h,f_i}^{(v-1)}-\frac{c_{h,f_{l-v+1}}^{(v-1)}c_{j-v+1,f_{i}}^{(v-1)}}{c_{j-v+1,f_{l-v+1}}^{(v-1)}} \in U_v$ for $2 \leq v \leq t$.

We can continue the process (\emph{i.e.}, $c_{j-v+1,f_{l-v+1}}^{(v-1)} \neq 0$) since the smallest non-invertible square matrix in the right lower corner of
\[
\begin{pmatrix}
    a_{1,f_1} & \hdots & a_{1,f_l} \\
    \vdots & & \vdots \\
    a_{j,f_1} & \hdots & a_{j,f_l}
\end{pmatrix}
\]
has the side length at least $t+2$, if even exist. Indeed, suppose that matrices in the right lower corner with side length less than or equal to $N$ are invertible and $N$ is maximal. The value $N$ is well-defined since the square matrix with side length $1$ is invertible. Assume contrary: $N \leq t$ and write
\[
C=\begin{pmatrix}
    a_{j-N+1,f_1} & \hdots & a_{j-N+1,f_{l-N}} \\
    \vdots & & \vdots \\
    a_{j,f_1} & \hdots & a_{j,f_{l-N}}
\end{pmatrix}.
\]

Assume first that $C$ is a zero matrix. Now
\[
\mathbf{0}=\begin{pmatrix}
    a_{j-N+1,f_{l-N+1}} & \hdots & a_{j-N+1,f_{l}} \\
    \vdots & & \vdots \\
    a_{j,f_{l-N+1}} & \hdots & a_{j,f_{l}}
\end{pmatrix}
\begin{pmatrix}
    s_{l-N+1} \\
    \vdots \\
    s_{l}
\end{pmatrix}
\]
that is not possible.

Assume then that $C$ is not a zero matrix. Clearly $N$ is greater than or equal to the number of columns in
\[
\begin{pmatrix}
    a_{1,f_1} & \hdots & a_{1,f_l} \\
    \vdots & & \vdots \\
    a_{j,f_1} & \hdots & a_{j,f_l}
\end{pmatrix}
\]
corresponding the columns of $B_{r \times (\delta-1)}$. Hence
\[
\begin{pmatrix}
    a_{j-N,f_{l-N}} & \hdots & a_{j-N,f_{l}} \\
    \vdots & & \vdots \\
    a_{j,f_{l-N}} & \hdots & a_{j,f_{l}}
\end{pmatrix}
=(\mathbf{e}_1|\mathbf{e}_2|\dots|\mathbf{e}_{\epsilon}|B')
\]
where each $\mathbf{e}_i$ has one $1$ and other elements are zeros, and these $1$s are in different rows, and all the square submatrices of $B'$ are invertible. Hence it is also invertible, against assumption. This proves that $N \geq t+1$.

Hence also $c_{j-t,f_{l-t}}^{(t)} \neq 0$, and we have
\begin{equation}
\begin{split}
\begin{pmatrix}
    b_1 \\
    \vdots \\
    b_{j-t} \\
\end{pmatrix}
= & \begin{pmatrix}
    c_{1,f_1}^{(t)}-\frac{c_{j-t,f_{1}}^{(t)}c_{1,f_{l-t}}^{(t)}}{c_{j-t,f_{l-t}}^{(t)}} & \hdots & c_{1,f_{l-t-1}}^{(t)}-\frac{c_{j-t,f_{l-t-1}}^{(t)}c_{1,f_{l-t}}^{(t)}}{c_{j-t,f_{l-t}}^{(t)}} & c_{1,f_{l-t}}^{(t)} \\
    \vdots & & \vdots & \vdots \\
    c_{j-t-1,f_1}^{(t)}-\frac{c_{j-t,f_{1}}^{(t)}c_{j-t-1,f_{l-t}}^{(t)}}{c_{j-t,f_{l-t}}^{(t)}} & \hdots & c_{j-t-1,f_{l-t-1}}^{(t)}-\frac{c_{j-t,f_{l-t-1}}^{(t)}c_{j-t-1,f_{l-t}}^{(t)}}{c_{j-t,f_{l-t}}^{(t)}} & c_{j-t-1,f_{l-t}}^{(t)} \\
    0 & \hdots & 0 & c_{j-t,f_{l-t}}^{(t)} \\
\end{pmatrix} \\
& \cdot \begin{pmatrix}
    s_1 \\
    \vdots \\
    s_{l-t-1} \\
    \frac{s_{1}c_{j-t,f_{1}}^{(t)} + \dots + s_{l-t}c_{j-t,f_{l-t}}^{(t)}}{c_{j-t,f_{l-t}}^{(t)}}
\end{pmatrix}
\end{split}
\end{equation}
and hence
\begin{equation}
\begin{split}
\mathbf{0} & = \mathbf{v} + \sum_{m=1}^{l}s_m\sum_{h=1}^{j}a_{h,f_m}\mathbf{g}_{i,h} \\
 &= \mathbf{v} + \sum_{m=1}^{l-t-1}s_m\sum_{h=1}^{j-t-1}\left(c_{h,f_m}^{(t)}-\frac{c_{j-t,f_{m}}^{(t)}c_{h,f_{l-t}}^{(t)}}{c_{j-t,f_{l-t}}^{(t)}}\right)\mathbf{g}_{i,h} + \frac{s_1c_{j-t,f_{1}}^{(t)}+\dots+s_{l-t}c_{j-t,f_{l-t}}^{(t)}}{c_{j-t,f_{l-t}}^{(t)}}\sum_{h=1}^{j-t}c_{h,f_{l-t}}^{(t)}\mathbf{g}_{i,h}
\end{split}
\end{equation}
which cannot be true since $(l-t-1)+1 \leq j-t$ and $\sum_{h=1}^{j-t}c_{h,f_{l-t}}^{(t)}\mathbf{g}_{i,h}$ is chosen such that it does not belong to the subspace that $\mathbf{v}, \sum_{h=1}^{j-t-1}\left(c_{h,f_1}^{(t)}-\frac{c_{j-t,f_{1}}^{(t)}c_{h,f_{l-t}}^{(t)}}{c_{j-t,f_{l-t}}^{(t)}}\right)\mathbf{g}_{i,h},\dots, \sum_{h=1}^{j-t-1}\left(c_{h,f_{l-t-1}}^{(t)}-\frac{c_{j-t,f_{l-t-1}}^{(t)}c_{h,f_{l-t}}^{(t)}}{c_{j-t,f_{l-t}}^{(t)}}\right)\mathbf{g}_{i,h}$ span, and $\frac{s_1c_{j-t,f_{1}}^{(t)}+\dots+s_{l-t}c_{j-t,f_{l-t}}^{(t)}}{c_{j-t,f_{l-t}}^{(t)}} \neq 0$ since $b_{j-t} \neq 0$.

Now, the sets $S_i$ consist of vectors $\{\mathbf{g}_{i,1},\dots,\mathbf{g}_{i,r},\mathbf{s}_{i,r+1},\dots,\mathbf{s}_{i,r+\delta-1}\}$ for $i=1,\dots,a$. If $b \neq 0$ the set $S_A$ consists of vectors $\{\mathbf{g}_{A,1},\dots,\mathbf{g}_{A,b-\delta+1},\mathbf{s}_{i,r+1}^{(b-\delta+1)},\dots,\mathbf{s}_{i,r+\delta-1}^{(b-\delta+1)}\}$. The matrix $\mathbf{G}$ is a matrix with vectors from the sets $S_1,S_2,\dots,S_A$ as its column vectors, \emph{i.e.},
\[
\mathbf{G}=\left(\mathbf{G}_1|\mathbf{G}_2|\dots|\mathbf{G}_A\right)
\]
where
\[
\mathbf{G_j}=\left(\mathbf{g}_{j,1}|\dots|\mathbf{g}_{j,r}|\mathbf{s}_{i,r+1}|\dots|\mathbf{s}_{i,r+\delta-1}\right)
\]
for $i=1,\dots,a$, and
\[
\mathbf{G_A}=\left(\mathbf{g}_{A,1}|\dots|\mathbf{g}_{A,b-\delta+1}|\mathbf{s}_{i,r+1}^{(b-\delta+1)}|\dots|\mathbf{s}_{i,r+\delta-1}^{(b-\delta+1)}\right)
\]
if $b\neq 0$.

To be a generator matrix for a code of dimension $k$, the rank of $\mathbf{G}$ has to be $k$. By the construction the rank is $k$ if and only if
$n-A(\delta-1) \geq k$, and this is what we assumed.

\begin{remark}
Notice that the estimations for $q$ are very rough in the construction. This is because we are mainly interested in the randomized case in which $q \rightarrow \infty$.
\end{remark}

\begin{remark}
Notice that in the above construction we could have chosen different matrices $M=(I_r | B_{r \times (\delta-1)})$ for each $\mathbf{G_j}$. Also, the sets $S_j$ do not have to be of the given size. We only need to assume that
\[
\sum_{j} |S_j| =n
\]
and $\delta \leq |S_j| \leq r +\delta-1$. Then the corresponding matrix is of type $(I_{|S_j|-\delta+1} | B_{(|S_j|-\delta+1) \times (\delta-1)})$.
\end{remark}

\subsection{Construction with random vectors}\label{Subsec:RandomProof}
If we choose randomly the vector $\mathbf{g}_{i,j}$ in the above construction the probability that we get a nonsuitable choice is at most
\[
2(r\delta)^{r4^{r+1}+4^{r+1}}\binom{n+2 (r\delta)^{r4^{r+1}}}{k-1}q^{k-1}.
\]
The size of the whole vector space is $q^k$. Hence the probability that the whole code is as in our construction, is at least
\[
\left(\frac{q^k-2(r\delta)^{r4^{r+1}+4^{r+1}}\binom{n+2 (r\delta)^{r4^{r+1}}}{k-1}q^{k-1}}{q^k}\right)^n=\left(1-\frac{2(r\delta)^{r4^{r+1}+4^{r+1}}\binom{n+2 (r\delta)^{r4^{r+1}}}{k-1}}{q}\right)^n \rightarrow (1-0)^n=1,
\]
as $q \rightarrow \infty$.

\subsection{The minimum distance of the constructed code}\label{Subsec:Analysis}
Next we will calculate the minimum distance of the constructed code with the assumption that the sets $S_j$ are of size $s_j$ ($j=1,\dots,A$), respectively. Assume also that $s_1 \leq \dots \leq s_A$. Write
\[
\mathbf{G}=\left(E_1|F_1|E_2|F_2|\dots|E_A|F_A\right),
\]
where $E_j=\left( \mathbf{g}_{j,1}|\dots|\mathbf{g}_{j,s_j-\delta+1} \right)$ and $F_j=\left( \mathbf{s}_{j,r+1},\dots,\mathbf{s}_{j,r+\delta-1} \right)$ for $j=1,\dots,A$.

Let $e_1,\dots,e_k \in \mathbb{F}_q$ be such elements that
\[
(e_1,\dots,e_k)\mathbf{G}
\]
is a vector of minimal weight. By changing columns between $E_j$s and $F_j$s we may assume that the weight of
\[
(e_1,\dots,e_k)\left(E_1|E_2|\dots|E_A\right)
\]
is minimal, that is, it has the biggest possible amount of zeros. So it has $k-1$ zeros.

Suppose that
\[
(e_1,\dots,e_k)F_j
\]
has a zero, \emph{i.e.}, its weight is not $\delta-1$. If $(e_1,\dots,e_k)E_j \neq \mathbf{0}$, then by changing columns between $E_j$ and $F_j$ we would get one more zero into $(e_1,\dots,e_k)\left(E_1|E_2|\dots|E_A\right)$ and that is not possible. Hence the number of zeros in $(e_1,\dots,e_k)\left(F_1|F_2|\dots|F_A\right)$ is at most $z(\delta-1)$ where $z$ is an integer with properties
\[
\sum_{j=1}^{z} (s_j-\delta+1) \leq k-1 \text{ and } \sum_{j=1}^{z+1} (s_j-\delta+1) > k-1.
\]
Hence the minimum distance of the code is
\[
n-(k-1)-z(\delta-1).
\]

\begin{exam}
Suppose that $n=A(r+\delta-1)$ and choose that $s_j-\delta+1=r$ for all $j=1,\dots,A$. Then, $z=\left\lfloor\frac{k-1}{r}\right\rfloor$ and hence the minimum distance is
\[
n-(k-1)-\left\lfloor\frac{k-1}{r}\right\rfloor(\delta-1)
=n-k-\left(\left\lceil\frac{k}{r}\right\rceil-1\right)(\delta-1) + 1 = d_{\text{opt}}(n,k,r,\delta)
\]
and hence the construction is optimal.

Suppose then that $n=a(r+\delta-1)+b$ with $0 \leq b<r+\delta-1$. If $0 < b < \delta$ then using the above optimal code with extra $b$ zero columns in the generator matrix we get a code with minimum distance $d_{\text{opt}}(n-b,k,r,\delta)=d_{\text{opt}}(n,k,r,\delta)-b$.

If $b \geq \delta$ then choose $s_j=r+\delta-1$ for $j=1,\dots,a$ and $s_{a+1}=b$. Now $z=\left\lceil\frac{k-b+\delta-1}{r}\right\rceil$ and hence the minimum distance is
\[
n-k-\left\lceil\frac{k-b+\delta-1}{r}\right\rceil(\delta-1) + 1.
\]
The distance $d_{\text{opt}}(n,k,r,\delta)-\left(n-k-\left\lceil\frac{k-b+\delta-1}{r}\right\rceil(\delta-1) + 1\right)$
is
\[
(\delta-1)\left(\left\lceil\frac{k-b+\delta-1}{r}\right\rceil-\left\lceil\frac{k}{r}\right\rceil +1\right) \leq \delta-1
\]
and hence the code is again at least almost optimal.
\end{exam}

\section{Conclusion}
In this paper we have studied linear locally repairable codes with all-symbol locality. We have constructed codes with almost optimal minimum distance. Namely, the difference between largest achievable minimum distance of locally repairable codes and the minimum distance of our codes is maximally $\delta-1$. Instead of just giving a construction, it is shown that by using random matrices with a guaranteed locality property, such matrix generates an almost optimal LRC with probability approaching to one as the field size approaches to infinity.

Also, methods to build new codes for different parameters using already existing codes are presented. Namely, a method to find a bigger code and a method to find a smaller code are presented.

As a future work it is still left to find the exact expression of the largest achievable minimum distance of the linear locally repairable code with all-symbol locality when given the length $n$ and the dimension $k$ of the code and the locality $(r,\delta)$. 


\end{document}